\documentclass[10pt,twocolumn,letterpaper]{article}
\usepackage{iccv}
%
% --- inline annotations
%

% --- disable by uncommenting  
% \renewcommand{\TODO}[1]{}
% \renewcommand{\todo}[1]{#1}

\definecolor{myblue}{rgb}{0.21,0.49,0.74}
\usepackage[pagebackref,breaklinks,colorlinks,allcolors=myblue]{hyperref}

\newcommand{\myparatight}[1]{\smallskip\noindent{\bf {#1}:}~}

\usepackage{times}
\usepackage{amsthm}
\usepackage{amsmath}
\usepackage{amssymb}
\usepackage{float}
\usepackage{graphics}                        
\usepackage{graphicx}
\usepackage[skip=0pt]{caption}
\usepackage{algorithm}
\usepackage{algpseudocode}
\usepackage{bm}
\usepackage{color}
\usepackage{multirow}
\usepackage{makecell}
\usepackage{soul}
\usepackage{pifont}
\usepackage{xcolor}

\usepackage{color, colortbl}
\definecolor{greyL}{RGB}{230,248,255}

\DeclareMathOperator*{\argmax}{argmax}

\usepackage{wrapfig}

\allowdisplaybreaks
\usepackage{xspace}

\usepackage[scaled]{beramono}

\newcommand{\alg}{\texttt{ATM}\xspace}

\newtheorem{thm}{Theorem}
\newtheorem{lem}{Lemma}

\title{Find a Scapegoat: Poisoning Membership Inference Attack and Defense to Federated Learning}

\author{Wenjin Mo$^{1}\thanks{Equal contribution. Wenjin Mo and Zhiyuan Li conducted this research while they were interns under the supervision of Minghong Fang.}$ \quad Zhiyuan Li$^{2*}$  \quad Minghong Fang$^{3}\thanks{Corresponding author.}$ \quad Mingwei Fang$^{4\dagger}$\\
$^{1}$Yale University, $^{2}$Independent Researcher, \\
$^{3}$University of Louisville, $^{4}$Guangdong Polytechnic Normal University \\
\texttt{wenjin.mo@yale.edu, arin.lee.lzy@gmail.com}, \\
\texttt{minghong.fang@louisville.edu, mw.fang@gpnu.edu.cn}
}

\begin{document}

\maketitle

\begin{abstract}
Federated learning (FL) allows multiple clients to collaboratively train a global machine learning model with coordination from a central server, without needing to share their raw data. This approach is particularly appealing in the era of privacy regulations like the GDPR, leading many prominent companies to adopt it. However, FL's distributed nature makes it susceptible to poisoning attacks, where malicious clients, controlled by an attacker, send harmful data to compromise the model. Most existing poisoning attacks in FL aim to degrade the model's integrity, such as reducing its accuracy, with limited attention to privacy concerns from these attacks. In this study, we introduce FedPoisonMIA, a novel poisoning membership inference attack targeting FL. FedPoisonMIA involves malicious clients crafting local model updates to infer membership information. Additionally, we propose a robust defense mechanism to mitigate the impact of FedPoisonMIA attacks. Extensive experiments across various datasets demonstrate the attack's effectiveness, while our defense approach reduces its impact to a degree.
\end{abstract}

% !TEX root = mainfile.tex

\section{Introduction} \label{sec:intro}

Federated Learning (FL)~\cite{konevcny2016federated1, konevcny2016federated2, yang2019federated, McMahan17} is a decentralized machine learning framework that allows multiple clients to collaboratively train a shared global model while maintaining data privacy by keeping raw data localized. In FL, the central server initiates the process by distributing initial global model parameters to all participating clients. Each client subsequently performs local model training on its private dataset, generating model updates that are then transmitted back to the server, where they are aggregated according to predefined aggregation rules. The server then updates the global model with the aggregated update and redistributes it to the clients. This iterative cycle continues until the model converges. 
Due to its emphasis on privacy preservation, FL has gained widespread adoption. However, recent research~\cite{yin2024poisoning, biggio2012poisoning, kairouz2021advances, bagdasaryan2020backdoor, baruch2019little, fang2020local, li2022learning, shejwalkar2021manipulating, tolpegin2020data,wang2025poisoning,zhang2024poisoning,yin2024poisoning} highlights vulnerabilities in FL to poisoning attacks, where malicious clients may send carefully crafted updates to alter the performance of the global model. Among these attacks, a specific type known as the \textit{poisoning membership inference attack} (PMIA)~\cite{choquette2021label, shokri2017membership, yeom2018privacy, wen2024privacy, ma2023loden, mahloujifar2022property, carlini2022membership, truex2018towards, ye2022enhanced} enables malicious clients to deduce whether a particular data sample is included in the training data of other clients, thereby threatening the privacy integrity of the FL system.

In this study, we introduce a sophisticated and novel attack method, which we call FedPoisonMIA, designed to surpass the capabilities of existing methods and expose critical, previously unaddressed privacy risks within FL systems. This method carefully crafts malicious model updates that maximize angular deviation relative to standard benign updates, which in turn escalates the risk of privacy breaches within the FL environment. 
For instance, FL is widely used in healthcare, allowing hospitals to train a shared model. However, our attack can be leveraged to extract sensitive patient information in this setting.
The primary goal of FedPoisonMIA is to exploit this deviation to infiltrate the FL process while minimizing the risk of detection. To achieve this, the attack method carefully embeds its malicious updates within a collection of benign updates, effectively disguising them to evade detection and filtering mechanisms that are conventionally employed by the central server. This strategic concealment not only ensures that the attack remains undetected over multiple communication rounds but also preserves its capacity to undermine privacy across the entire FL process, enabling persistent and ongoing privacy compromise. Through this technique, FedPoisonMIA reveals the limitations of current FL defenses and highlights the pressing need for advanced protection mechanisms against such nuanced and deeply embedded attacks.

While a variety of Byzantine-robust mechanisms have been proposed to counteract the adverse effects of poisoning attacks in FL~\cite{McMahan17, yin2018byzantine, aji2017sparse, dwork2008differential, blanchard2017machine, fang2020local,2021FLTrust,fang2025byzantine,fang2025we,fang2024byzantine,fang2025provably,fang2022aflguard,xie2024fedredefense,xu2024robust,xie2018generalized}, these methods predominantly focus on preventing data and model corruption. However, they largely overlook membership inference attacks, a distinct and persistent threat that seeks to uncover information about the participation of individual data points in the training process. Existing Byzantine-robust approaches have shown limited efficacy in addressing this privacy vulnerability, particularly against our newly introduced attack method, which strategically maximizes angular deviation to evade detection.
To address this critical gap in FL defenses, we propose a novel Byzantine-robust mechanism named Angular Trimmed-mean (\alg), designed specifically to counteract such membership inference attacks with heightened resilience. Our method employs angular deviation criteria to rigorously scrutinize incoming model updates, identifying and filtering out malicious contributions based on their deviation from the majority's directional alignment. Specifically, updates exhibiting substantial angular deviations from the bulk of other updates are flagged as potential threats and subsequently removed from the aggregation process. 
This approach could effectively mitigate the impact of malicious clients.
By implementing \alg, we aim to reinforce FL’s robustness against privacy breaches, bridging the existing gap in protection against membership inference vulnerabilities.

Experimental evaluations conducted on a diverse set of datasets from various domains reveal that our proposed attack method is capable of consistently bypassing the detection measures of all examined Byzantine-robust mechanisms, while also achieving a notably high attack accuracy. These results underscore the effectiveness of our attack in navigating around existing defenses, thereby highlighting a significant privacy vulnerability within federated learning systems. Conversely, our experimental findings further demonstrate the efficacy of our defense mechanism in counteracting multiple types of PMIAs. By successfully reducing the attack accuracy of these PMIAs, our defense approach plays a crucial role in diminishing the associated risks of privacy leakage. This twofold experimental analysis emphasizes the need for improved defense mechanisms and showcases the capability of our proposed \alg to mitigate privacy risks within FL frameworks.

Our main contributions are as follows:
\begin{itemize}
    \item
    We introduce an innovative PMIA method that enhances the angular deviation between malicious and benign updates to maximize impact while evading detection, resulting in high attack accuracy against a range of established Byzantine-robust mechanisms.
    
    \item 

    We present a Byzantine-robust defense mechanism called \alg, designed to detect and filter malicious updates by assessing the angular distance between them. This approach effectively reduces the impact of PMIA attacks within FL systems.

    \item 
Our experiments on diverse benchmarks confirm that our attack outperforms existing PMIAs against various Byzantine-robust defenses. Additionally, our proposed defense effectively reduces PMIA accuracy, substantially enhancing privacy in FL.

\end{itemize}
% !TEX root = mainfile.tex

\section{Preliminaries and Related Work}
\label{sec:preliminaries}
\subsection{Federated Learning (FL): An Overview} 
\label{sec:background}

Consider a federated learning (FL) system with \( n \) clients and a central server. Each client \( k \in [n] \) possesses a local dataset \( \mathcal{D}_k \). Let \( \mathcal{D} \) represent the combined dataset of all clients, defined as \( \mathcal{D} = \cup_{k \in [n]} \mathcal{D}_k \). 
Rather than training a machine learning model on the entire dataset \( \mathcal{D} \), FL allows these \( n \) clients to collaboratively train a single global model with the support of the central server, without sharing each client’s raw training data.
The training objective in FL can be formulated as the following optimization problem:
\begin{align}
\label{fl_obj}
\min_{\bm{w} \in \mathbb{R}^d} f(\bm{w}) =  \sum_{k \in [n]} \frac{|\mathcal{D}_k|}{|\mathcal{D}|} F_k(\bm{w}, \mathcal{D}_k),
\end{align}
where \( \bm{w} \) represents the model parameters, \( d \) is the dimension of \( \bm{w} \), \( |\mathcal{D}_k| \) denotes the size of \( \mathcal{D}_k \), and \( F_k(\bm{w}, \mathcal{D}_k) \) is the local training objective for client \( k \). 
In particular, FL tackles Problem~(\ref{fl_obj}) through an iterative process. During training round \( t \), this involves the following three steps:
\begin{itemize}

    \item \myparatight{Global Model Synchronization}The central server selects a fraction \( C \) of the clients and sends the current global model \( \bm{w}^t \) to each of these chosen clients, where \( C \) falls within the range \( (0, 1] \).

    \item \myparatight{Training of local models}Each selected client \( k \) refines its local model using the current global model \( \bm{w}^t \) along with its local dataset. Specifically, client \( k \) selects a mini-batch of training example \( \mathcal{S}_k \) from \( \mathcal{D}_k \) and calculates a local model update in the form of a gradient \( \bm{g}_k^t = \frac{1}{|\mathcal{S}_k|} \sum_{h \in \mathcal{S}_k} \nabla F_k(\bm{w}^t, h) \). 
    The computed update \( \bm{g}_k^t \) is then transmitted to the server.

    \item \myparatight{Updating of the global model}After receiving the local model updates from all clients, the server applies an aggregation rule \( \mathcal{A} \) to merge these updates. It then updates the global model as follows:
    \begin{align}
    \label{fl_model_aggre}
    \bm{w}^{t+1} = \bm{w}^t - \eta \cdot \mathcal{A}(\{\bm{g}_k^t\}_{k \in [n]}),
    \end{align}
    where $\eta$ is the learning rate and assume $C=1$ in Eq.~(\ref{fl_model_aggre}). FL methods mainly differ in their aggregation rules. For example, FedAvg~~\cite{McMahan17} aggregates updates as \( \mathcal{A}(\{\bm{g}_k^t\}_{k \in [n]}) = \sum_{k \in [n]} \frac{|\mathcal{D}_k|}{|\mathcal{D}|} \bm{g}_k^t \).
\end{itemize}

\subsection{Membership Inference Attacks to FL}
\label{sec:Privacy_Attacks_FL}

Membership Inference Attacks (MIA)~\cite{hu2022membership} are privacy attacks that seek to determine if a given input sample is part of a target machine learning model’s training data. MIA can be classified as either passive or active. In passive attacks, an attacker queries a trained model through an API and identifies training samples by analyzing the model’s responses; for example, prior work~\cite{leino2020stolen} assumes that a sample is a ``member'' if the model’s prediction is accurate, suggesting higher accuracy on familiar data. Similarly, high prediction confidence has been used as an indicator that a sample is part of the training set~\cite{song2019privacy,yeom2018privacy}. In contrast, active attacks~\cite{gomes2024active,chen2020gan} involve manipulating the attacker's local data or directly crafting gradient updates. For instance,~\cite{nasr2019comprehensive} uses gradient ascent to heighten the response difference between trained and untrained data, while AGREvader~\cite{zhang2023agrevader} masks gradients from label-flipped samples to better evade Byzantine-robust defenses.

\myparatight{Distinctions between poisoning attacks vs poisoning membership inference attacks (PMIAs)}%
Poisoning attacks in FL seek to degrade global model integrity by introducing harmful data or gradient updates, resulting in reduced classification accuracy. Conversely, PMIAs target client privacy by identifying the presence of particular data points within benign clients' datasets, without noticeably impacting the global model's performance. This subtlety renders PMIAs design inherently more difficult compared to conventional poisoning attacks.

\subsection{Defenses Against MIA to FL} 
\label{sec:Robust_FL}

Defenses against MIA in FL can be categorized into non-aggregation-based and aggregation-based approaches. Among non-aggregation methods, Differential privacy (DP)~\cite{dwork2008differential} and Top-$k$~\cite{aji2017sparse} are prominent. DP adds Gaussian noise to gradients to reduce privacy risks, while Top-$k$ selects only the top $k$ dimensions with the highest absolute values in each gradient, nullifying others to minimize attack effects.
In aggregation-based defenses, several Byzantine-robust rules have been proposed~\cite{rieger2022deepsight,yin2018byzantine,blanchard2017machine,fang2020local}. 
For instance,
the Median~\cite{yin2018byzantine} method calculates the element-wise median of client updates, resisting outliers, though it may fail when malicious updates resemble benign ones.

% !TEX root = mainfile.tex

\begin{table}[t]
  \centering
  % \tiny
  \scriptsize
   \addtolength{\tabcolsep}{-2.75pt}
  \caption{Difference between full-knowledge attack and partial-knowledge attack.}
    \begin{tabular}{|c|c|c|c|}
    \hline
    & {\makecell {Benign clients' \\ gradients}}  & {\makecell {Malicious clients' \\ gradients}}  & {\makecell {Server' \\ aggregation rule}}  \\
    \hline
    Full-knowledge attack & \textcolor{green!70!black}{\ding{51}}  &  \textcolor{green!70!black}{\ding{51}} & \textcolor{red!90!black}{\ding{55}} \\
    \hline
    Partial-knowledge attack & \textcolor{red!90!black}{\ding{55}}  & \textcolor{green!70!black}{\ding{51}} & \textcolor{red!90!black}{\ding{55}} \\
    \hline
    \end{tabular}%
  \label{tab:full_part}%
  		\vspace{-.2in}
\end{table}%

\section{Problem Statement} \label{sec::problem Statement}
\myparatight{Attacker's goal}The objective of the attacker is to infer indirectly whether particular samples are part of the training sets used by benign clients within the FL system, effectively enabling a form of data theft. This type of attack is particularly concerning as it reveals sensitive information about the clients’ private datasets without requiring direct access. By confirming the presence of specific samples, the attacker can breach the confidentiality of client data, undermining the core privacy protections FL is designed to provide.

\myparatight{Attacker’s capability and knowledge}In line with previous works~\cite{sun2019can, munoz2017towards, tolpegin2020data, bhagoji2019analyzing}, our attack model allows the attacker to manipulate its local data and adjust its updates before sending them to the server. Additionally, as per~\cite{fang2020local,shejwalkar2021manipulating}, the attacker may control multiple malicious clients/devices. 
In line with previous studies~\cite{li2024open, fang2020local, yin2018byzantine}, we analyze two levels of attacker knowledge: full-knowledge attack, where the attacker leverages updates from all clients (including benign ones) to design malicious updates, and partial-knowledge attack, where the attacker uses only malicious clients' updates. 
In both scenarios, the attacker is unaware of the server's aggregation method. Table~\ref{tab:full_part} provides an overview of these attack scenarios.

\myparatight{Defender's goal}Our defense achieves Byzantine robustness against malicious clients, protecting benign clients' privacy while preserving model accuracy and efficiency. Specifically, we aim for: (1) Robustness, minimizing attackers' ability to steal local data; (2) Fidelity, ensuring accuracy comparable to FedAvg when no attacks occur; and (3) Efficiency, maintaining client workloads similar to FedAvg without extra computational overhead.

\section{Our Attack} 
\label{sec:alg}

\begin{figure}[t]
	\centering
	\includegraphics[scale = 0.42]{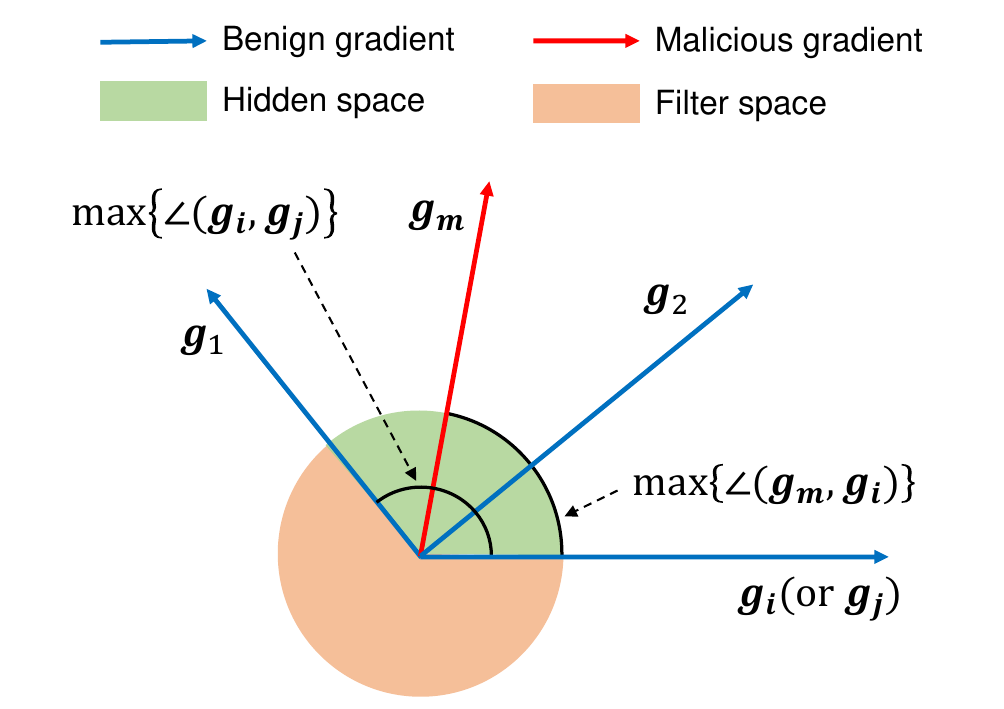}
	\caption{Overview of our attack: Malicious gradients target specific samples while blending into benign updates. By ensuring their angular deviation stays below the largest benign gradient difference, our attack manipulates robust defenses into mistakenly discarding benign gradients (e.g., $\bm{g}_1$) as outliers. 
}
	\label{attack_fig}
		\vspace{-.2in}
\end{figure}

\subsection{Attacks as an Optimization Problem}
% %

The success of our proposed attack leverages core machine learning principles. When an attacker trains a sample with an incorrect label, the loss for that sample rises. However, if the same sample with its correct label is present in a benign client's training set, the loss normalizes, preserving high classification accuracy. By observing high classification accuracy on certain samples, the attacker can deduce that these samples are part of the benign clients’ training data, thus indirectly stealing data.
However, sending gradients from mislabeled samples alone would result in high angular deviation, making them easily detectable by Byzantine-robust mechanisms. To avoid detection, the attacker incorporates correctly labeled gradients to mask the malicious ones, ensuring the final crafted gradient does not appear as an outlier.
In practice, the attacker possesses a dataset containing an attack set \( D_{\text{attack}} \) (samples targeted for inference) and a mask set \( D_{\text{mask}} \) (samples for gradient masking). Using these subsets, the attacker generates malicious updates \( \bm{g}_{\text{attack}} \) from \( D_{\text{attack}} \) and mask updates \( \bm{g}_{\text{mask}} \) from \(\hat{D}_{\text{mask}} \), a subset of \( D_{\text{mask}} \). The final malicious gradient \( \bm{g}_{\text{malicious}} \) sent to the server is computed as:
\begin{align}
\bm{g}_{\text{malicious}} = \alpha \bm{g}_{\text{attack}} + \bm{g}_{\text{mask}},
\end{align}
where \( \alpha \) is a scaling factor. The attacker’s objective is to maximize the loss on \( D_{\text{attack}} \) while remaining undetected by leveraging \( \hat{D}_{\text{mask}} \). This goal can be formulated as:
\begin{align}
\label{opt_obj}
\arg\max_{\alpha, \hat{D}_{\text{mask}}} F_r\left( \mathcal{A}\left(\{\bm{g}_j\}_{j \in \mathcal{M}} \cup \{\bm{g}_i\}_{i \in \mathcal{B}} \right), D_{\text{attack}} \cup \hat{D}_{\text{mask}} \right),
\end{align}
where \( \mathcal{M} \) represents the set of malicious clients, \( \mathcal{B} \) represents the benign clients, \( r \in \mathcal{M} \) denotes a malicious client, and \( F_r \) is the local objective for client \( r \).

An illustration of our attack is shown in Figure~\ref{attack_fig}.

\subsection{Approximating the Optimization Problem}
However, directly solving the problem in Eq.~(\ref{opt_obj}) presents challenges due to the non-differentiable nature of the aggregation rule \( \mathcal{A} \). To address this, we demonstrate an approach to approximate the optimization problem. In the following sections, we provide a detailed breakdown of the steps necessary to conduct a MIA in the training phase. This outline highlights each phase’s critical components, guiding the effective execution of MIA within FL.
During the training process, the attacker leverages both \( D_{\text{attack}} \) and \( D_{\text{mask}} \) to craft malicious gradients that influence the target samples while evading detection and filtering by the server. The detailed steps involved in this process are as follows:

\myparatight{Step I. Generate attack gradients by the attack samples}To initiate the attack, the attacker modifies the original attack set \( D_{\text{attack}} \) to a new set \( \hat{D}_{\text{attack}} \) by replacing the true labels of the samples with incorrect labels, which are randomly chosen from the remaining available labels. This manipulation results in a gradient shift for the targeted samples, enabling the attacker to track variations in the loss function associated with these samples. By analyzing these loss changes, the attacker can infer whether the targeted samples are included in the training sets of benign clients, indirectly revealing private information.

\myparatight{Step II. Select mask samples based on greedy selection algorithm}To bypass the server's Byzantine-robust mechanism, a masking gradient \( \bm{g}_{\text{mask}} \) is introduced to obscure the attack gradient \( \bm{g}_{\text{attack}} \), ensuring it blends in and does not stand out as an outlier relative to the normal gradients \( \cup \bm{g}^{i}_{\text{benign}} \) produced by all benign clients, where $\bm{g}^i_{\text{benign}}$ denotes the gradient of benign client $i$.
The masking gradient \( \bm{g}_{\text{mask}} \) is created using samples from a carefully selected mask set \( \hat{D}_{\text{mask}} \) from \( D_{\text{mask}} \). 
Consequently, the attacker’s objective is to identify a subset \( \hat{D}_{\text{mask}} \), containing a fixed number of masking samples, that meets the following: 
\begin{equation}
\label{angle_constraint}
\begin{aligned}
    & \argmax_{\hat{D}_{\text{mask}} \subset {D_{\text{mask}}}} \max_{i \in \mathcal{B}} \{\angle(\bm{g}_{\text{malicious}}, \bm{g}^i_{\text{benign}}) \}\\
    & s.t. \max_{i \in \mathcal{B}} \angle(\bm{g}_{\text{malicious}}, \bm{g}^i_{\text{benign}}) \\
    & \leq \max_{i,j\in \mathcal{B}} \angle(\bm{g}^i_{\text{benign}}, \bm{g}^j_{\text{benign}}) \\
    & |\hat{D}_{\text{mask}}| = \lfloor \gamma |D_{\text{mask}}| \rfloor  \\
    & \bm{g}_{\text{malicious}} = \alpha \bm{g}_{\text{attack}} + \bm{g}_{\text{mask}},
\end{aligned}
\end{equation}
where \( \angle(\cdot) \) denotes the angle between two gradients, and \( \gamma \in (0,1) \) represents the proportion of the number of mask samples to be selected.
Solving this optimization problem—specifically, identifying the ideal mask set \( \hat{D}_{\text{mask}} \) from the pool \( D_{\text{mask}} \)—is an NP-hard challenge \cite{kempe2003maximizing}. There are \( {\binom{|D_{\text{mask}}|}{|\hat{D}_{\text{mask}}|}} \) possible combinations to evaluate for an optimal solution, making it computationally prohibitive for the attacker to exhaustively examine each alternative. To address this, we employ a greedy selection algorithm that approximates a solution to Eq.~(\ref{angle_constraint}) efficiently.
Specifically, we initialize \( \hat{D}_{\text{mask}} = \emptyset \) and iteratively add to \( \hat{D}_{\text{mask}} \) the sample in the set $D_{\text{mask}}$ \textbackslash $\hat{D}_{\text{mask}}$ that maximizes the objective function in Eq.~(\ref{angle_constraint}) when combined with the current \( \hat{D}_{\text{mask}} \). This process is repeated until the number of mask samples in \( \hat{D}_{\text{mask}} \) reaches \( \lfloor \gamma |D_{\text{mask}}| \rfloor \). The pseudocode of our greedy mask sample selection algorithm is shown in Algorithm~\ref{greedy_mask_sample_slection_algorithm}.

\begin{algorithm}[!t]
	\caption{{Greedy mask sample selection.}}
        \label{greedy_mask_sample_slection_algorithm}
	\begin{algorithmic}[1]
		\renewcommand{\algorithmicrequire}{\textbf{Input:}}
		\renewcommand{\algorithmicensure}{\textbf{Output:}}
            \Require Mask set $D_{\text{mask}}$, parameter $\gamma$.
		\Ensure  Selected mask set $\hat{D}_{\text{mask}}$.
		\State Initialize $ |\hat{D}_{\text{mask}}| = \emptyset $.
		\While {$|\hat{D}_{\text{mask}}| < \lfloor \gamma \vert D_{\text{mask}} \vert \rfloor$}

            \State Select $s = \displaystyle\argmax_{\{k\} \cup \hat{D}_{\text{mask}}} \displaystyle\max_{i \in \mathcal{B}} \{\angle(\bm{g}_{\text{malicious}}, \bm{g}^i_{\text{benign}}) \}$  
                        
            \State \hspace{1em} with $k \in  D_{\text{mask}} \setminus \hat{D}_{\text{mask}}$ and the same constraint 
            \State \hspace{1em} in Eq.~(\ref{angle_constraint}).
		  \State $\hat{D}_{\text{mask}} \leftarrow \hat{D}_{\text{mask}} \cup \{s\}$.
        \EndWhile \\
		\Return $\hat{D}_{\text{mask}}$.
	\end{algorithmic} 
\end{algorithm}

\myparatight{Step III. Optimize the scaling coefficient $\alpha$}In Step II, we begin by setting the scaling coefficient \( \alpha \) as a constant and proceed to optimize the mask set \( \hat{D}_{\text{mask}} \). Following this, we shift focus to adjusting the scaling factor \( \alpha \) while keeping the selected mask set \(\hat{D}_{\text{mask}} \) unchanged. In other words, this step involves solving the following optimization problem:
\begin{equation}
\label{alpha_optimization}
\begin{aligned}
    & \argmax_{\alpha} \max_{i \in \mathcal{B}} \{\angle(\bm{g}_{\text{malicious}}, \bm{g}^i_{\text{benign}}) \}\\
    & s.t. \max_{i \in \mathcal{B}} \angle(\bm{g}_{\text{malicious}}, \bm{g}^i_{\text{benign}}) \\
    & \leq \max_{i,j\in \mathcal{B}} \angle(\bm{g}^i_{\text{benign}}, \bm{g}^j_{\text{benign}}) \\
    & \bm{g}_{\text{malicious}} = \alpha \bm{g}_{\text{attack}} + \bm{g}_{\text{mask}}.
\end{aligned}
\end{equation}

\myparatight{Step IV. Send malicious gradients to the server}Following the completion of these steps, the attacker finalizes and transmits the crafted malicious gradient \( \bm{g}_{\text{malicious}} \) to the server. This gradient is strategically designed to evade detection, blending in with the benign gradients while carrying out the intended attack objectives.

\myparatight{Remark}%
We approximately solve the original NP-hard optimization problem using a heuristic approach. Since our attack subtly manipulates gradient directions to evade detection while achieving its goal, quantifying errors from the greedy solver is challenging. Ultimately, the attack’s real-world impact is the primary concern.

\section{Our Defense}

In this section, we introduce a defense mechanism aimed at mitigating the effects of various MIAs.
Inspired by the Trimmed-mean approach~\cite{yin2018byzantine}, our defense strategy is built on the core concept of discarding gradients identified as malicious. By filtering out these harmful gradients, our mechanism effectively strengthens the server's resilience against potential attacks, thereby enhancing the overall robustness of the FL system against privacy threats.
The pseudocode of our defense is shown in Algorithm~\ref{ATM} in Appendix.

\subsection{Motivation}

Motivated by our new attack insights and the Trimmed-mean approach, we propose Angular Trimmed-mean (\alg), a defense that leverages gradient angles to accurately detect and filter malicious updates based on directional alignment with benign gradients.
The central principle of the \alg method is to filter out gradients that exhibit directional inconsistencies, identifying them as outliers. To determine whether a gradient qualifies as an outlier, we compute the average angle between each gradient and all other gradients. This average angle serves as a basis for evaluating the gradient’s alignment with the majority. In the following section, we outline the detailed steps of our algorithm to implement this process effectively.

\myparatight{Step I}Compute the angle \( \theta_{i, j} \) between each pair of gradients \( \bm{g}_i \) and \( \bm{g}_j \) within the set \( \mathcal{G} \), where \( \mathcal{G} \) denotes the set of all benign and malicious gradients, and \( 1 \le i < j \le |\mathcal{G}| \).

\myparatight{Step II}For each gradient \( \bm{g}_k \), calculate the mean angle between \( \bm{g}_k \) and all other gradients as the following:
\begin{equation}
\label{compute average angle}
\begin{aligned}
     \bar{\theta}_k = \frac{1}{|\mathcal{G}|-1} \sum \theta_{k,l}, 
  \quad   \text{s.t. } \medspace  1 \le l \le |\mathcal{G}|, l \neq k. 
\end{aligned}
\end{equation}

\myparatight{Step III}Arrange each gradient \( \bm{g}_k \in \mathcal{G} \) in ascending order based on its mean angle. Then, construct the set \( \mathcal{G}' \) by retaining the gradients with the smallest absolute values of mean angles while removing the top \( 2b \) gradients exhibiting the largest absolute values of mean angles, where $b$ is the trim parameter and $b > 0$.

\myparatight{Step IV}Calculate the aggregated gradient \( \bar{g} \) by taking the average of the gradients selected in the set \( \mathcal{G}' \): 
\begin{equation}
\label{compute average angle}
\begin{aligned}
     \bar{\bm{g}} = \frac{1}{|\mathcal{G}|-2b} \sum_{\bm{g} \in \mathcal{G}'} \bm{g}, 
 \quad   \text{s.t. } \medspace   2b < |\mathcal{G}|.
\end{aligned}
\end{equation}

\subsection{Statistical Convergence Guarantees}
The following theorem establishes that the \alg guarantees a strictly bounded \( \ell_2 \)-deviation between post-aggregation angular measurements and their theoretical optimal values. Complete proof is provided in the Appendix~\ref{proof}.

\begin{thm}
Consider \(n\) independent and identically distributed (i.i.d.) random angles \(\{\theta_i\}_{i=1}^n\) sorted in ascending order, each drawn from a distribution \(\Omega\) with mean \(\mathbb{E}[\Omega] = \omega\) and variance \(\mathrm{Var}(\{\theta_i\}_{i=1}^n) = \sigma^2\). 
Let $b$ be the trim parameter and \(\mathcal{G}'\) represent the set of gradients remaining after applying ATM. 
If there are \(m\) malicious angles and \(2m < n\), then:
\[
\mathbb{E}\left\| \frac{1}{|\mathcal{G}'|} \sum_{\bm{g} \in \mathcal{G}'} \theta_{\bm{g}} - \omega \right\|_2^{2} \leq \frac{2(n-m)(b+1)\sigma^2}{(n-b-m)^{2}}.
\]
\label{bound}
\end{thm}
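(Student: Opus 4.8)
The plan is to control the aggregated angle $\frac{1}{|\mathcal{G}'|}\sum_{\bm g\in\mathcal{G}'}\theta_{\bm g}$ by viewing it as an average of a subset of the sorted i.i.d.\ angles, and then bound its deviation from $\omega$ via a bias--variance decomposition. First I would fix notation: after ATM removes the $2b$ gradients with the largest mean angles, the surviving set $\mathcal{G}'$ has size $|\mathcal{G}'| = n - 2b$ if we count all angles, but the theorem statement treats $m$ malicious angles among the $n$ total, with $2m < n$; so I would first argue that in the worst case the $m$ malicious angles are all trimmed away (this is exactly what the angular criterion is designed to do, since malicious updates are the directional outliers), leaving a set of at least $n - b - m$ genuinely benign angles drawn from $\Omega$. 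The denominator $(n-b-m)^2$ in the bound and the factor $(n-m)(b+1)$ in the numerator strongly suggest the intended accounting: we keep $n-2b$ angles, of which at least $n-b-m$ are i.i.d.\ from $\Omega$, and we are averaging over order statistics.

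Second, I would write $\widehat\theta := \frac{1}{|\mathcal{G}'|}\sum_{\bm g\in\mathcal{G}'}\theta_{\bm g}$ and decompose
\[
\mathbb{E}\bigl\|\widehat\theta - \omega\bigr\|_2^2 = \mathrm{Var}(\widehat\theta) + \bigl(\mathbb{E}[\widehat\theta] - \omega\bigr)^2,
\]
since these are scalar angles. The variance term is handled by the standard fact that a sum of a subset of size $k$ of the $n$ i.i.d.\ variables (even selected by order statistics, as long as the selection is the same deterministic rank set) has variance at most $\tfrac{k}{k^2}\sigma^2 \cdot(\text{correction})$; more carefully, for order statistics one uses that $\mathrm{Var}\bigl(\sum_{i\in I}\theta_{(i)}\bigr)$ is dominated by $|I|\cdot\max_i \mathrm{Var}(\theta_{(i)})$ plus covariances, but the cleanest route is to note $\sum_{i\in I}\theta_{(i)} \le \sum_{i=1}^n \theta_i$ termwise is false for arbitrary $I$; instead I would use the trimmed-mean variance bound from the Yin et al.\ trimmed-mean analysis~\cite{yin2018byzantine}, which gives that each retained order statistic has variance $O(\sigma^2)$ and that the trimmed average has variance $\le \frac{c\,\sigma^2}{n-2b}$ for an absolute constant, then absorb the count $n-m$ of available benign angles into the numerator. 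The bias term $(\mathbb{E}[\widehat\theta]-\omega)^2$ comes from the asymmetry of trimming only the top $2b$: I would bound $|\mathbb{E}[\widehat\theta]-\omega|$ by the expected gap between extreme order statistics and the mean, which is again $O(\sigma)$ by Chebyshev/Cauchy--Schwarz on order statistics, contributing the $O(b\,\sigma^2/(n-b-m)^2)$-type term after squaring and re-normalizing.

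Third, I would combine the two pieces, pull out the common denominator $(n-b-m)^2$, and verify the constants match $2(n-m)(b+1)$ — this is where the bookkeeping is delicate: the factor $(b+1)$ (rather than $b$) accounts for the off-by-one in counting trimmed vs.\ retained ranks, and the factor $2$ absorbs both the variance and bias contributions. I expect the \textbf{main obstacle} to be making the order-statistics variance/covariance estimate rigorous while keeping the constant exactly at $2$: naive bounds ($\mathrm{Var}(\theta_{(i)})\le n\sigma^2$ for extreme order statistics, which is tight for heavy-tailed $\Omega$) are far too lossy, so I would instead rely on the fact that ATM sorts by \emph{mean pairwise angle} $\bar\theta_k$, not by the raw angle, and argue that the retained $\bar\theta_k$ values are concentrated around $\omega$ with variance $O(\sigma^2/n)$ by a standard $U$-statistic / Hájek-projection argument — this smoothing is what ultimately buys the clean $\sigma^2/(n-b-m)^2$ scaling. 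A secondary subtlety is justifying that conditioning on "which angles are malicious and get trimmed" does not inflate the variance of the benign remainder, which follows because the benign angles are i.i.d.\ and exchangeable, so any rank-based selection among them yields a sub-collection whose empirical mean still has variance $\le \sigma^2/(\text{size})$.
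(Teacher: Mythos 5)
There is a genuine gap, and it is at the very first step of your outline: you reduce to the case where ``in the worst case the $m$ malicious angles are all trimmed away,'' so that $\mathcal{G}'$ contains only benign angles. That is not the worst case and cannot be assumed: a malicious angle can be crafted to sit in the middle of the benign ones and survive the trim, while the $2b$ removed values may all be benign. The bound has to hold in exactly that situation. The paper never argues that malicious values are removed; instead it proves a deterministic order-statistics sandwich (its Lemma~1, valid when $m < b \le \lfloor n/2\rfloor - 1$): the $(b+i)$-th smallest element of the contaminated sorted sequence lies between the $(b-m+i)$-th and the $(b+i)$-th smallest element of the benign subsequence. Summing over the retained ranks, the trimmed average of the (possibly still contaminated) kept angles is squeezed between two averages taken over subsets of the \emph{benign} angles of size $n-b-m$ — this sandwich is what produces the $(n-b-m)$ in the denominator, and without something playing its role your reduction to ``at least $n-b-m$ genuinely benign angles'' does not get off the ground.

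The second gap is the moment bookkeeping. Your bias--variance decomposition requires variance and covariance control for averages of rank-selected order statistics, and the fact you lean on — that any rank-based sub-collection of i.i.d.\ variables has an empirical mean with variance at most $\sigma^2/(\text{size})$ — is false: for two i.i.d.\ Bernoulli($p$) variables with small $p$, the maximum has variance close to $2\sigma^2$, not at most $\sigma^2$. The appeals to the Yin et al.\ trimmed-mean constants and to a U-statistic/H\'ajek-projection argument for the pairwise-mean angles are likewise not carried out, and the theorem's model (i.i.d.\ scalar angles, trim the extremes) does not even formalize the pairwise-angle structure you want to exploit. The paper sidesteps order-statistic moments entirely: after the sandwich, for any benign index subset $T$ with $|T| = n-b-m$ it writes the subset sum as the full benign sum minus the complement sum (complement of size $b$), applies $(x-y)^2 \le 2x^2 + 2y^2$ and Cauchy--Schwarz to the complement, and only then takes expectations using i.i.d.-ness of the benign angles; the two resulting terms, $\frac{2(n-m)\sigma^2}{(n-b-m)^2}$ and $\frac{2b(n-m)\sigma^2}{(n-b-m)^2}$, add up exactly to the stated constant $2(n-m)(b+1)$. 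So both the robustness-to-trimming step and the variance step in your plan need to be replaced by arguments of this kind, not merely tightened.
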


% !TEX root = mainfile.tex

\section{Experimental Evaluation}  \label{sec:exp}

\begin{table*}[htbp]
  \centering
  % \small
  \footnotesize
  \addtolength{\tabcolsep}{-1.6pt}
  \caption{Results of attack accuracy for \( C=0.8 \), where $C$ represents the proportion of clients selected in each round.}
  \label{main_results}
  \renewcommand{\arraystretch}{1.2} % Adjust row spacing
  \begin{tabular}{|c|c|cc|cc|cc|cc|cc|cc|}
    \hline
    \multirow{2}{*}{Dataset} & \multirow{2}{*}{Attack} & \multicolumn{2}{c|}{DP} & \multicolumn{2}{c|}{Top-$k$} & \multicolumn{2}{c|}{FedAvg} & \multicolumn{2}{c|}{Median} & \multicolumn{2}{c|}{Trimmed-mean} & \multicolumn{2}{c|}{\alg} \\
    \cline{3-14}
          &       & IID   & Non-IID & IID   & Non-IID & IID   & Non-IID & IID   & Non-IID & IID   & Non-IID & IID   & Non-IID \\
    \hline
    \multirow{5}{*}{Texas100} & Passive &   0.650 & 0.626 & 0.646 & 0.623 & 0.643 & 0.616 & 0.583 & 0.566 & 0.630 & 0.606 & 0.600 & 0.596 \\
          & GA &  0.826 & 0.750 & 0.826 & 0.746 & 0.826 & 0.786 & 0.820 & 0.770 & 0.810 & 0.783 & 0.766 & 0.773 \\
          & AGREvader & 0.766 & 0.756 & 0.766 & 0.767 & 0.804 & 0.840 & 0.761 & 0.750 & 0.756 & 0.767 & 0.741 & 0.754 \\
          & Adaptive &  0.749 & 0.767 & 0.757 & 0.776 & 0.784 & 0.787 & 0.761 & 0.753 & 0.766 & 0.777 & 0.743 & 0.764 \\
          & \cellcolor{greyL}FedPoisonMIA &  \cellcolor{greyL}0.890 & \cellcolor{greyL}0.906 & \cellcolor{greyL}0.891 & \cellcolor{greyL}0.893 & \cellcolor{greyL}0.897 & \cellcolor{greyL}0.897 & \cellcolor{greyL}0.913 & \cellcolor{greyL}0.930 & \cellcolor{greyL}0.880 & \cellcolor{greyL}0.887 & \cellcolor{greyL}0.803 & \cellcolor{greyL}0.853 \\
    \hline
    \multirow{5}{*}{CIFAR-10} & Passive & 0.587 & 0.580 & 0.570 & 0.570 & 0.580 & 0.570 & 0.560 & 0.573 & 0.560 & 0.560 & 0.553 & 0.590 \\
          & GA &   0.697 & 0.640 & 0.713 & 0.613 & 0.727 & 0.610 & 0.720 & 0.597 & 0.727 & 0.657 & 0.651 & 0.623 \\
          & AGREvader &  0.651 & 0.677 & 0.643 & 0.670 & 0.700 & 0.613 & 0.703 & 0.590 & 0.703 & 0.637 & 0.627 & 0.643 \\
          & Adaptive &  0.658 & 0.676 & 0.657 & 0.672 & 0.681 & 0.697 & 0.668 & 0.614 & 0.687 & 0.668 & 0.650 & 0.643 \\
          & \cellcolor{greyL}FedPoisonMIA & \cellcolor{greyL}0.913 & \cellcolor{greyL}0.777 & \cellcolor{greyL}0.927 & \cellcolor{greyL}0.767 & \cellcolor{greyL}0.913 & \cellcolor{greyL}0.786 & \cellcolor{greyL}0.753 & \cellcolor{greyL}0.660 & \cellcolor{greyL}0.857 & \cellcolor{greyL}0.803 & \cellcolor{greyL}0.713 & \cellcolor{greyL}0.650 \\
    \hline
    \multirow{5}{*}{STL10} & Passive &  0.603 & 0.603 & 0.610 & 0.583 & 0.617 & 0.593 & 0.573 & 0.547 & 0.587 & 0.573 & 0.613 & 0.553 \\
          & GA &  0.820 & 0.713 & 0.830 & 0.733 & 0.807 & 0.723 & 0.786 & 0.650 & 0.800 & 0.693 & 0.730 & 0.677 \\
          & AGREvader &  0.730 & 0.727 & 0.691 & 0.753 & 0.797 & 0.733 & 0.776 & 0.710 & 0.810 & 0.747 & 0.687 & 0.670 \\
          & Adaptive &  0.714 & 0.750 & 0.724 & 0.753 & 0.703 & 0.733 & 0.695 & 0.760 & 0.707 & 0.753 & 0.694 & 0.733 \\
          & \cellcolor{greyL}FedPoisonMIA & \cellcolor{greyL}0.920 & \cellcolor{greyL}0.837 & \cellcolor{greyL}0.917 & \cellcolor{greyL}0.827 & \cellcolor{greyL}0.900 & \cellcolor{greyL}0.817 & \cellcolor{greyL}0.863 & \cellcolor{greyL}0.783 & \cellcolor{greyL}0.833 & \cellcolor{greyL}0.847 & \cellcolor{greyL}0.827 & \cellcolor{greyL}0.773 \\
    \hline
    \multirow{5}{*}{FER2013} & Passive & 0.640 & 0.596 & 0.630 & 0.626 & 0.630 & 0.580 & 0.586 & 0.536 & 0.603 & 0.566 & 0.626 & 0.686 \\
          & GA &  0.803 & 0.800 & 0.776 & 0.796 & 0.823 & 0.766 & 0.736 & 0.683 & 0.790 & 0.790 & 0.733 & 0.713 \\
          & AGREvader &  0.840 & 0.763 & 0.763 & 0.780 & 0.883 & 0.853 & 0.773 & 0.746 & 0.744 & 0.736 & 0.752 & 0.743 \\
          & Adaptive &  0.836 & 0.843 & 0.774 & 0.803 & 0.854 & 0.863 & 0.779 & 0.749 & 0.804 & 0.816 & 0.741 & 0.756 \\
          & \cellcolor{greyL}FedPoisonMIA & \cellcolor{greyL}0.920 & \cellcolor{greyL}0.910 & \cellcolor{greyL}0.926 & \cellcolor{greyL}0.960 & \cellcolor{greyL}0.933 & \cellcolor{greyL}0.953 & \cellcolor{greyL}0.816 & \cellcolor{greyL}0.766 & \cellcolor{greyL}0.936 & \cellcolor{greyL}0.913 & \cellcolor{greyL}0.786 & \cellcolor{greyL}0.806 \\
    \hline

    \end{tabular}%
\vspace{-.1in}
\end{table*}

\subsection{Experimental Setup}

\subsubsection{Datasets}

% %

We assess our proposed attack, defense mechanism, and baseline methods using four real-world datasets: CIFAR-10~\cite{krizhevsky2009learning}, STL10~\cite{coates2011analysis}, Texas100~\cite{texas100}, and FER2013~\cite{goodfellow2013challenges}. 
See Appendix~\ref{data_app} for details.

\subsubsection{Comparison PMIA}
In our experiment, we employ several baseline PMIA methods to assess the effectiveness of our proposed attack: Passive Membership Inference Attack~\cite{leino2020stolen}, Gradient Ascent (GA)~\cite{nasr2019comprehensive}, AGREvader~\cite{zhang2023agrevader}, and Adaptive attack. A complete attack description is presented in Appendix~\ref{attack_decp}.

\subsubsection{Comparison Defenses}

We evaluate the performance of our attack and defense using various typical robust mechanisms: FedAvg~\cite{McMahan17}, Median~\cite{yin2018byzantine}, Trimmed-mean~\cite{yin2018byzantine}, Differential Privacy~\cite{dwork2008differential}, Top-$k$~\cite{aji2017sparse}, Multi-Krum~\cite{blanchard2017machine}, Fang~\cite{fang2020local}, and DeepSight~\cite{rieger2022deepsight}. Details of these mechanisms are listed in Appendix~\ref{aggregate_rule}.

\begin{table*}[htbp]
  \centering
  \footnotesize
  \addtolength{\tabcolsep}{-1.6pt}
  \caption{Results of attack accuracy for partial-knowledge attack.}
  \label{partial_attack}
  \renewcommand{\arraystretch}{1.2} % Adjust row spacing
  \begin{tabular}{|c|c|cc|cc|cc|cc|cc|cc|}
    \hline
    \multirow{2}{*}{Dataset} & \multirow{2}{*}{Attack} & \multicolumn{2}{c|}{DP} & \multicolumn{2}{c|}{Top-$k$} & \multicolumn{2}{c|}{FedAvg} & \multicolumn{2}{c|}{Median} & \multicolumn{2}{c|}{Trimmed-mean} & \multicolumn{2}{c|}{\alg} \\
    \cline{3-14}
          &       & IID   & Non-IID & IID   & Non-IID & IID   & Non-IID & IID   & Non-IID & IID   & Non-IID & IID   & Non-IID \\
    \hline
    \multirow{5}{*}{Texas100} 
          & Passive        & 0.583 & 0.610 & 0.583 & 0.603 & 0.587 & 0.603 & 0.533 & 0.560 & 0.573 & 0.597 & 0.577 & 0.600 \\
          & GA             & 0.783 & 0.763 & 0.780 & 0.753 & 0.773 & 0.747 & 0.740 & 0.673 & 0.760 & 0.733 & 0.720 & 0.717 \\
          & AGREvader      & 0.730 & 0.717 & 0.730 & 0.707 & 0.767 & 0.747 & 0.753 & 0.683 & 0.767 & 0.747 & 0.717 & 0.700 \\
          & Adaptive       & 0.728 & 0.730 & 0.721 & 0.720 & 0.717 & 0.733 & 0.723 & 0.700 & 0.730 & 0.713 & 0.721 & 0.727 \\
          & \cellcolor{greyL}FedPoisonMIA & \cellcolor{greyL}0.853 & \cellcolor{greyL}0.873 & \cellcolor{greyL}0.853 & \cellcolor{greyL}0.860 & \cellcolor{greyL}0.843 & \cellcolor{greyL}0.857 & \cellcolor{greyL}0.829 & \cellcolor{greyL}0.838 & \cellcolor{greyL}0.823 & \cellcolor{greyL}0.807 & \cellcolor{greyL}0.757 & \cellcolor{greyL}0.767 \\
    \hline
  \end{tabular}%
\vspace{-.1in}
\end{table*}

\begin{table*}[!t]
  \centering
  % \small
    \footnotesize
  \addtolength{\tabcolsep}{-1.6pt}
  \renewcommand{\arraystretch}{1.2}
  \caption{Comparison of attack accuracy between random selection and greedy selection.}
    \begin{tabular}{|c|c|c|c|c|c|c|c|c|c|c|c|c|}
    \hline
    \multirow{2}{*}{Method} & \multicolumn{2}{c|}{DP} & \multicolumn{2}{c|}{Top-$k$} & \multicolumn{2}{c|}{FedAvg} & \multicolumn{2}{c|}{Median} & \multicolumn{2}{c|}{Trimmed-mean} & \multicolumn{2}{c|}{\alg} \\
\cline{2-13}          & IID  &Non-IID   & IID  & Non-IID   & IID    & Non-IID   &IID   & Non-IID   & IID   & Non-IID   & IID&Non-IID  \\
    \hline
    Random & 0.748 & 0.757 & 0.751 & 0.757 & 0.771 & 0.761 & 0.744 & 0.767 & 0.751 & 0.748 & 0.751 & 0.774 \\
    \hline
    Greedy & 0.887 & 0.894 & 0.914 & 0.947 & 0.884 & 0.894 & 0.807 & 0.837 & 0.884 & 0.890 & 0.880 & 0.904 \\
    \hline
    \end{tabular}%
  \label{cover_selectionl}%
\vspace{-.1in}
\end{table*}%

\subsubsection{Synchronous and Asynchronous Setting}

We assess the performance of our methods in both synchronous and asynchronous scenarios. In the synchronous setting, the server updates the global model only after receiving updates from all clients. On the other hand, in the asynchronous setting, the server immediately updates the global model upon receiving a single client's model update, without waiting for the others. To simulate asynchronous behavior, we follow the approach outlined in~\cite{fang2022aflguard}, randomly sampling client delays from the interval \([0, \tau_\text{{max}}]\), where \(\tau_\text{{max}}\) is set to 5 by default.

\subsubsection{Parameters Setting}
The default parameter settings for the FL setup, the composition of the attack and mask sets, as well as the model and training details, are provided in Appendix~\ref{parameter_setting}.

\subsubsection{Non-IID Setting}

We evaluate both independent and identically distributed (IID) and Non-IID settings using four real-world datasets. To simulate the Non-IID setting, we employ a group-based data partitioning strategy~\cite{fang2020local}. Specifically, we divide the clients into \(h\) groups, with each group corresponding to one of the dataset’s class. A sample with label \(q\) is assigned to the \(q\)-th group with a probability bias \(\beta\), while the remaining groups receive the sample with a probability of \(\frac{1-\beta}{h-1}\). Each client within a group receives training examples in a balanced manner. By default, we set \(\beta = 0.5\).

\subsubsection{Evaluation Metrics}

In line with prior research~\cite{song2019membership, shokri2017membership, rezaei2021difficulty}, we consider the following three key metrics:

\myparatight{\textbf{Attack accuracy}}Attack accuracy is the highest proportion of correctly identified samples in the best-performing round of the attack process.

\myparatight{\textbf{Attack precision}}Attack precision evaluates the ratio of correctly predicted true members to the total number of predicted true and false members.

\myparatight{\textbf{Attack recall}}Attack recall measures the fraction of correctly predicted true members among all actual members. It reflects the attack's effectiveness in identifying all member samples, highlighting the method's ability to capture as many targets as possible.

\begin{figure*}[!t]
	\centering
	\includegraphics[scale = 0.46]{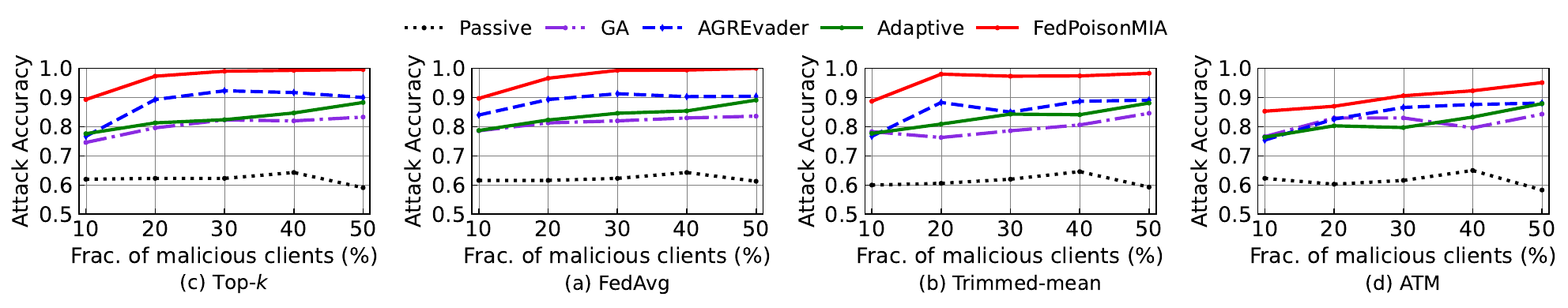}
	\caption{Impact of fraction of malicious clients. 
}
	\label{num_malicious_clients}
		\vspace{-.15in}
\end{figure*}

\begin{figure*}[!t]
	\centering
	\includegraphics[scale = 0.46]{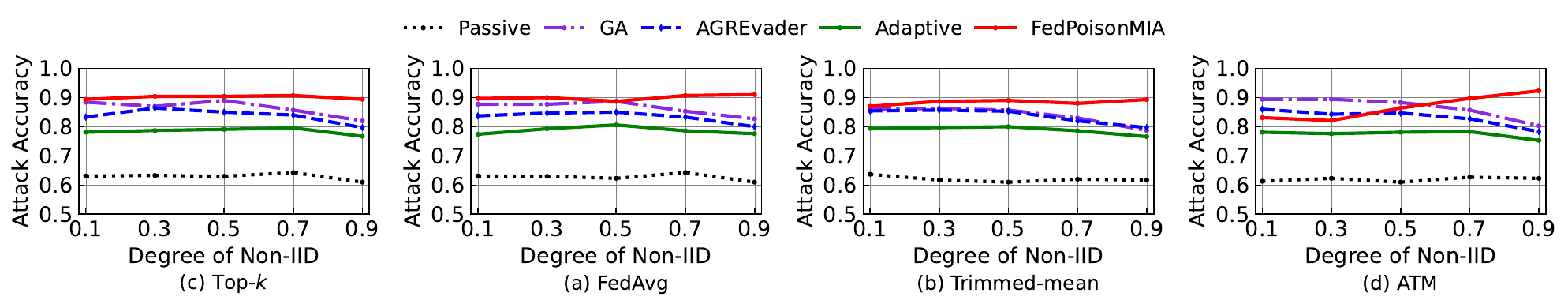}
	\caption{Impact of degree of Non-IID. 
}
	\label{non_iid}
		\vspace{-.15in}
\end{figure*}

\subsection{Experimental Results}

\myparatight{Our proposed attack is effective}%
We report results on four real-world datasets. The attack accuracies of different methods under various defense mechanisms are summarized in Table~\ref{main_results} (for $C = 0.8$) and Table~\ref{attack_acc_syn_1.0} (for $C = 1.0$) in Appendix, where $C$ denotes the fraction of clients selected per round.
Our proposed attack consistently achieves the highest accuracy across all datasets and scenarios, demonstrating its superiority over three baseline methods. For example, on the Texas100 dataset with \(C = 0.8\) under a Non-IID distribution, our attack improves attack accuracy by 15.6\%, 11.6\%, 5.7\%, 16.0\%, 10.3\%, and 8.0\% compared to the best baseline attacks under different defense mechanisms.
The results also show higher attack accuracy in IID settings compared to Non-IID, indicating that the consistency of updates in IID data facilitates the attack. For instance, in the IID setting with \(C=0.8\), our attack achieves an accuracy of 86.3\% on the STL10 dataset with Median defense, compared to 78.3\% in Non-IID settings. Additionally, the results for \(C=0.8\) and \(C=1.0\) are similar, suggesting that increasing client participation does not significantly affect the attack’s effectiveness.

We further evaluate the attack in an asynchronous FL setting, with attack accuracies for both $C = 0.8$ and $C = 1.0$ reported in Table~\ref{asyn_attack_acc_result} (Appendix). In addition to attack accuracy, we consider attack precision and attack recall as key metrics for assessing attack performance. The attack precision results under the synchronous setting for both $C = 0.8$ and $C = 1.0$ are presented in Table~\ref{precisoion_result} (Appendix), while the asynchronous results are shown in Table~\ref{precisoion_result_2}. Attack recall under synchronous and asynchronous settings is provided in Table~\ref{recall_result} and Table~\ref{recall_result_2}, respectively. The global model’s test accuracies under both settings are reported in Table~\ref{test_acc_result} and Table~\ref{test_acc_result_2}. Additional evaluations against Byzantine-robust defenses such as Fang, Multi-Krum, and DeepSight on the Texas100 dataset are shown in Table~\ref{addition_defense}.

\myparatight{Partial-knowledge attack}%
Additionally, we evaluate the performance of our attack under partial-knowledge attack scenario, with the detailed results presented in Table~\ref{partial_attack}. In partial-knowledge attack, our attack consistently surpasses other baseline methods across various defense mechanisms. Remarkably, the efficacy of our attack demonstrates consistent robustness, with a maximum performance degradation of merely 8.0\% across all defense mechanisms under both IID and Non-IID settings, when compared to the full-knowledge attack scenario.

\myparatight{Our greedy mask sample selection algorithm is effective}Table~\ref{cover_selectionl} presents the attack accuracy results on the Texas100 dataset, comparing the mask samples selected by our proposed greedy selection algorithm with those chosen randomly. Our approach notably surpasses random selection, showing an improvement of over 20\% in attack accuracy. This demonstrates that the carefully selected mask samples play a crucial role in helping the malicious gradient evade detection by existing Byzantine-robust mechanisms.

\myparatight{Our proposed defense is effective}As shown in Table~\ref{main_results}, Table~\ref{partial_attack}, and Table~\ref{addition_defense}, \alg achieves consistently lower attack accuracy across most settings compared to other defense methods, showcasing its effectiveness and robustness, especially when defending against our proposed attack method.
For instance, on the Texas100 dataset with IID distribution under $C=0.8$, \alg restricts the accuracy of our attack to 80.3\%, significantly lower than other defenses, such as Median (89.13\%) and Top-$k$ (89.1\%).

Moreover, our \alg mechanism demonstrates strong resilience against adaptive attacks, consistently maintaining the lowest attack accuracy among all baseline defenses. Notably, it also effectively suppresses the attack accuracy to a level lower than that achieved by our proposed attack method.
Table~\ref{test_acc_result} in Appendix demonstrates that our defense achieves strong defense performance without compromising the training effectiveness of the global model, maintaining similar test accuracy to the scenario without malicious clients in FL.
Furthermore, \alg does not impose any additional computational cost on the clients.

\begin{table}[t]
  \centering
    \footnotesize
  \addtolength{\tabcolsep}{-2.5pt}
  \caption{Results of attack accuracy on Texas100 under additional defense mechanisms.}
    \begin{tabular}{|c|c|c|c|c|c|c|}
      \hline
    \multirow{2}{*}{Attack} & \multicolumn{2}{c|}{Fang} & \multicolumn{2}{c|}{Multi-Krum} & \multicolumn{2}{c|}{DeepSight} \\
\cline{2-7}          & IID   & Non-IID & IID   & Non-IID & IID   & Non-IID \\
     \hline
    Passive & 0.643 & 0.643 & 0.606 & 0.603 & 0.730 & 0.706 \\
     \hline
    GA    & 0.833 & 0.806 & 0.863 & 0.830 & 0.826 & 0.793 \\
     \hline
    AGREvader & 0.903 & 0.910 & 0.836 & 0.876 & 0.810 & 0.846 \\
    \hline
    Adaptive   &  0.738 & 0.756 & 0.693 & 0.723 & 0.693 & 0.733 \\
     \hline
    \cellcolor{greyL}FedPoisonMIA   &  \cellcolor{greyL}0.930 & \cellcolor{greyL}0.950 & \cellcolor{greyL}0.890 & \cellcolor{greyL}0.906 & \cellcolor{greyL}0.840 & \cellcolor{greyL}0.853 \\
     \hline
     
    \end{tabular}%
  \label{addition_defense}%
	\vspace{-.2in}
\end{table}%

\myparatight{Impact of fraction of malicious clients}%
Fig.~\ref{num_malicious_clients} presents the attack results on the Texas100 dataset as the proportion of malicious clients increases from 10\% to 50\%, with the total number of clients fixed at 10. 
Note that in Fig.~\ref{num_malicious_clients}, we compare \alg only with Top-$k$, FedAvg, and Trimmed-mean, as Top-$k$ represents a typical non-aggregation-based defense, FedAvg serves as the standard aggregation rule in FL, and Trimmed-mean is a representative aggregation-based defense.
We observe that attack accuracy rises as the number of malicious clients increases. Notably, our attack method achieves nearly 100\% attack accuracy when malicious clients constitute 30\%, under Top-$k$, FedAvg, and Trimmed-mean, and while baseline attacks hover around 90\%. However, under our proposed defense, the attack accuracy is significantly reduced, highlighting the effectiveness and robustness of our defense mechanism.

\myparatight{Impact of degree of Non-IID}%
Fig.~\ref{non_iid} illustrates the impact of varying the Non-IID degree, controlled by parameter $\beta$, on attack accuracy under different defense mechanisms and Texas100 dataset is considered. 
The Non-IID levels are set to \( \{0.1, 0.3, 0.5, 0.7, 0.9\} \).
Our attack method consistently achieves high accuracy, even under the extreme Non-IID distribution of 90\%. In contrast, other attacks, such as Gradient Ascent and AGREvader, see a decline in accuracy as the degree of Non-IID increases.

\myparatight{Impact of the total number of clients}As shown in Fig.~\ref{num_clients} in Appendix, the attack accuracy decreases overall as the total number of clients increases in FL training on the Texas100 dataset, with client numbers varying in \( \{8, 10, 15, 20, 30\} \) while maintaining a constant number of 1 malicious client. This decline occurs because, with more clients, the proportion of the malicious gradient in the aggregated gradient becomes smaller, reducing its influence on the global model update.
Although the attack accuracy decreases, our method still achieves the highest attack accuracy compared to baseline attack methods. Additionally, all attack methods show relatively low accuracy when evaluated against our proposed defense.

\myparatight{Impact of number of attack sample}%
Fig.~\ref{target_sample} in Appendix shows the attack accuracy results for different numbers of attack samples in \(\{100, 200, 300, 400, 500\} \), where Texas100 dataset is considered. 
We observe that as the number of attack samples increases, the attack accuracy gradually decreases across various defense mechanisms. Our attack method consistently achieves the highest accuracy under FedAvg, Trimmed-mean, and Top-$k$. However, our \alg effectively reduces the attack accuracy for all methods, particularly our proposed attack, achieving lower accuracy compared to other defenses in most cases.

\myparatight{Computational overhead of our attack and defense}%
Table~\ref{attack_cost} (Appendix) presents the runtime of our proposed attack compared to AGREvader across four datasets, showing that the total execution time is comparable to AGREvader, thereby demonstrating its practicality.
Table~\ref{defense_cost} (Appendix) summarizes the runtime of each defense under settings with 10 and 50 clients. \alg incurs significantly lower computational overhead than other approaches. This efficiency stems from the fact that the dominant cost in both baselines and \alg is due to sorting. While Median and Trimmed-mean perform sorting over high-dimensional parameter vectors (typically exceeding 100,000 dimensions), \alg only requires sorting over the number of clients, which is considerably smaller.

% !TEX root = mainfile.tex

\section{Conclusion}

We propose a poisoning membership inference attack (PMIA) that optimizes malicious gradients to maximize target update deviations while remaining indistinguishable from benign ones, evading detection. This exposes a major privacy risk in FL. To defend against PMIA, we introduce a Byzantine-robust mechanism that filters updates with significant angular deviations. Extensive experiments validate the effectiveness of both our attack and defense.

\section*{Acknowledgments}
We thank the anonymous reviewers for their comments.

{
\small
\bibliographystyle{ieeenat_fullname}
\bibliography{refs}
}

% !TEX root = mainfile.tex

\clearpage
% \maketitlesupplementary
\appendix

\section{Convergence Analysis of Angular Trimmed-mean Aggregation (\alg)}
\label{proof}

Before proving Theorem~\ref{bound}, we first present Lemma~\ref{lemma_1}. The proof is partially inspired by~\cite{xie2018phocas}.

\begin{lem}
\label{lemma_1}
Let $\{\theta_i\}_{i=1}^n$ be a sorted sequence of scalar values in ascending order, where $m$ entries are assumed to be malicious. For clarity, we refer to the remaining $n - m$ benign values as $\{\hat{\theta}_i\}_{i=1}^{n - m}$, which form a subset of the original sequence.
Thus, for \(m < b \leq \lfloor n/2 \rfloor - 1\), 
\[
\hat{\theta}_{b-m+i} \stackrel{(I)} \leq \theta_{b+i} \stackrel{(II)} \leq \hat{\theta}_{b+i},\quad 1 \leq i \leq n - 2b,
\] 
where \(\hat{\theta}_{b+i}\) is the \((b+i)\)-th smallest element in \(\{\hat{\theta}_i\}_{i=1}^{n-m}\), and \(\theta_{b+i}\) is the \((b+i)\)-th smallest element in \(\{\theta_i\}_{i=1}^n\).
\label{lem}
\end{lem}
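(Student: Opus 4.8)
Write $r = b+i$, so that for $1 \le i \le n-2b$ the index $r$ ranges over $b+1,\dots,n-b$ and the asserted chain becomes $\hat\theta_{r-m} \le \theta_r \le \hat\theta_r$. The entire argument is an elementary counting exercise resting on the single structural fact that the benign values $\{\hat\theta_j\}_{j=1}^{n-m}$ are obtained from $\{\theta_j\}_{j=1}^{n}$ by deleting the $m$ malicious entries (and, being a subsequence of a sorted sequence, are themselves sorted ascending). I would first check that every index in sight is admissible: since $m < b$ we have $b \ge m+1$, hence $r - m \ge b+1-m \ge 2$ and $r \le n-b \le n-m-1 < n-m$, so $\hat\theta_{r-m}$ and $\hat\theta_r$ are well defined; and $b \le \lfloor n/2\rfloor-1$ makes the range $1 \le i \le n-2b$ non-empty. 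Ties among equal values are broken by original position so that ``the $k$-th smallest element'' is unambiguous as a statement about multisets.

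For inequality (II), $\theta_r \le \hat\theta_r$: the $r$ smallest benign values $\hat\theta_1 \le \cdots \le \hat\theta_r$ are $r$ elements of the original multiset, each of them $\le \hat\theta_r$. Hence at least $r$ of the $n$ original values lie at or below $\hat\theta_r$, which forces the $r$-th smallest original value to satisfy $\theta_r \le \hat\theta_r$. In words: deleting entries can never decrease an order statistic.

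For inequality (I), $\hat\theta_{r-m} \le \theta_r$: consider the $r$ smallest original values $\theta_1 \le \cdots \le \theta_r$. Only $m$ of the original values are malicious, so at least $r-m$ of these $r$ values are benign, and each such benign value is $\le \theta_r$. Thus at least $r-m$ benign values lie at or below $\theta_r$, and therefore the $(r-m)$-th smallest benign value obeys $\hat\theta_{r-m} \le \theta_r$. Substituting $r = b+i$ recovers exactly the claimed bound, and chaining (I) with (II) gives $\hat\theta_{b-m+i} \le \theta_{b+i} \le \hat\theta_{b+i}$ for all $1 \le i \le n-2b$.

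There is no genuine obstacle here; the only care required is bookkeeping — fixing the tie-breaking convention so the counting is literally valid for multisets with repeated angles, and verifying (as above) that the shifted indices $b-m+i$ and $b+i$ never leave $\{1,\dots,n-m\}$, which is precisely where the hypotheses $m < b$ and $b \le \lfloor n/2\rfloor - 1$ enter. This interlacing is exactly what is then used in the proof of Theorem~\ref{bound}: it lets one sandwich each surviving trimmed angle between benign order statistics term by term, after which a variance bound on $\{\hat\theta_i\}_{i=1}^{n-m}$ closes the estimate.
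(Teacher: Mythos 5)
Your proof is correct and rests on exactly the same counting of order statistics as the paper's proof, which merely phrases the two inequalities as contradictions (e.g., if $\hat{\theta}_{b-m+i} > \theta_{b+i}$ there would be more than $n-b-i$ elements above the $(b+i)$-th smallest) rather than as your direct counts of how many benign values lie at or below the relevant quantile. The added bookkeeping on index ranges and tie-breaking is fine but not a substantive difference.
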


\begin{proof}
We prove each of the two inequalities individually.

\myparatight{Inequality I}% 
Suppose for contradiction that \(\hat{\theta}_{b-m+i} > \theta_{b+i}\). This implies there exist \((n-m) - (b-m+i) + 1 = n - b - i + 1\) correct values strictly greater than \(\theta_{b+i}\). However, since \(\theta_{b+i}\) is the \((b+i)\)-th smallest element in \(\{\theta_i\}_{i=1}^n\), there can be at most \(n - (b+i) = n - b - i\) elements greater than \(\theta_{b+i}\). This contradiction establishes \(\hat{\theta}_{b-m+i} \leq \theta_{b+i}\).

\myparatight{Inequality II}%
Suppose for contradiction that \(\theta_{b+i} > \hat{\theta}_{b+i}\). This implies there exist \(b+i\) correct values strictly less than \(\theta_{b+i}\). However, since \(\theta_{b+i}\) is the \((b+i)\)-th smallest element in\(\{\theta_i\}_{i=1}^n\),  there can be at most \(b+i-1\) elements less than \(\theta_{b+i}\). This contradiction establishes \(\theta_{b+i} \leq \hat{\theta}_{b+i}\).
\end{proof}

\myparatight{Proof of Theorem~\ref{bound}}
% \begin{proof}
According to Lemma~\ref{lem}, we have 
\begin{align*}
&\sum_{i=b-m+1}^{n-b-m}(\hat{\theta}_i - \omega) 
\leq \sum_{i=b+1}^{n-b}(\theta_i - \omega) 
\leq \sum_{i=b+1}^{n-b}(\hat{\theta}_i - \omega) \\
\Rightarrow \quad 
&\frac{\sum_{i=1}^{n-b-m}(\hat{\theta}_i - \omega)}{n-b-m} 
\leq \frac{\sum_{i=b+1}^{n-b}(\theta_i - \omega)}{n-2b} 
\leq \frac{\sum_{i=b+1}^{n-m}(\hat{\theta}_i - \omega)}{n-b-m} \\
\Rightarrow \quad 
&\left[\frac{\sum_{i=b+1}^{n-b}(\theta_i- \omega)}{n-2b}\right]^2 \\
&\leq \max\left\{
\left[\frac{\sum_{i=1}^{n-b-m}(\hat{\theta}_i - \omega)}{n-b-m}\right]^2,
\left[\frac{\sum_{i=b+1}^{n-m}(\hat{\theta}_i - \omega)}{n-b-m}\right]^2 
\right\}.
\end{align*}

Thus, one has that:
\begin{align*}
&\left[\,\frac{1}{|\mathcal{G}'|} \sum_{\bm{g} \in \mathcal{G}'} \theta_{\bm{g}} - \omega\right]^2 \\
&= \left[\frac{\sum_{i=b+1}^{n-b}\theta_i}{n-2b} - \omega \right]^2 \\
&= \left[\frac{\sum_{i=b+1}^{n-b}(\theta_i - \omega)}{n-2b}\right]^2 \\
&\leq \max\left\{ 
\left[\frac{\sum_{i=1}^{n-b-m}(\hat{\theta}_{i} - \omega)}{n-b-m}\right]^2, 
\left[\frac{\sum_{i=b+1}^{n-m}(\hat{\theta}_{i} - \omega)}{n-b-m}\right]^2 
\right\}.
\end{align*}

Note that for any subset $T \subseteq [n - m]$ with size $|T| = n - b - m$, the following bound holds:
\begin{align*}
    & \left[ \frac{\sum_{i \in T} (\hat{\theta}_i - \omega)}{n - b - m} \right]^2 \nonumber \\
    &= \left[ \frac{\sum_{i \in [n - m]} (\hat{\theta}_i - \omega) - \sum_{i \notin T} (\hat{\theta}_i - \omega)}{n - b - m} \right]^2 \nonumber \\
    &\leq 2 \left[ \frac{\sum_{i \in [n-m]} (\hat{\theta}_i - \omega)}{n - b - m} \right]^2 
        + 2 \left[ \frac{\sum_{i \notin T} (\hat{\theta}_i - \omega)}{n - b - m} \right]^2 \nonumber \\
    &= \frac{2(n - m)^2}{(n - b - m)^2} \left[ \frac{\sum_{i \in [n-m]} (\hat{\theta}_i - \omega)}{n - m} \right]^2  \nonumber\\
        &\quad+ \frac{2b^2}{(n - b - m)^2} \left[ \frac{\sum_{i \notin T} (\hat{\theta}_i - \omega)}{b} \right]^2 \nonumber \\
    &\leq \frac{2(n - m)^2}{(n - b - m)^2} \left[ \frac{\sum_{i \in [n-m]} (\hat{\theta}_i - \omega)}{n - m} \right]^2 \nonumber \\
        &\quad + \frac{2b^2}{(n - b - m)^2} \frac{\left[\sum_{i \notin T} (\hat{\theta}_i - \omega) \right]^2}{b} \nonumber \\
    &\leq \frac{2(n - m)^2}{(n - b - m)^2} \left[ \frac{\sum_{i \in [n-m]} (\hat{\theta}_i - \omega)}{n - m} \right]^2  \nonumber \\
        &\quad+ \frac{2b^2}{(n - b - m)^2} \frac{\left[ \sum_{i \in [n-m]}  (\hat{\theta}_i - \omega) \right] ^2}{b}.
\end{align*}

Taking the expectation yields:
\begin{align*}
&\quad \mathbb{E} \left[ \frac{\sum_{i \in T} (\hat{\theta}_i - \omega)}{n - b - m} \right]^2 \\ 
&\leq \frac{2(n - m)^2}{(n - b - m)^2} \cdot \frac{\sigma^2}{n - m} + \frac{2b^2}{(n - b - m)^2} \cdot \frac{(n - m) \sigma^2}{b} \\
&= \frac{2(n - m) \sigma^2}{(n - b - m)^2} + \frac{2b(n - m) \sigma^2}{(n - b - m)^2} \\
&= \frac{2(n - m)(b + 1) \sigma^2}{(n - b - m)^2}.
\end{align*}

Putting all the above components together, one has the following:
\[
\mathbb{E}\left\| \frac{1}{|\mathcal{G}'|} \sum_{\bm{g} \in \mathcal{G}'} \theta_{\bm{g}} - \omega \right\|_2^{2} \leq \frac{2(n-m)(b+1)\sigma^2}{(n-b-m)^{2}}.
\]

The proof is complete.

% \end{proof}

\section{Dataset Description}
\label{data_app}

Detailed descriptions of the datasets used to evaluate our attack and defense method are provided below.
\myparatight{Texas100~\cite{texas100}}This dataset comprises hospital discharge records, containing inpatient data from various medical facilities, as published by the Texas Department of State Health Services. It includes 67,330 records with 6,170 binary features representing the 100 most frequently performed medical procedures. The records are organized into 100 distinct categories, each representing a unique patient type.

\myparatight{CIFAR-10~\cite{krizhevsky2009learning}}This dataset is a well-established benchmark for real-world object recognition, comprising 60,000 color images distributed evenly across 10 classes. It includes 50,000 images for training and 10,000 for testing, with a balanced number of images in each class.

\myparatight{STL10~\cite{coates2011analysis}}Like CIFAR-10, this dataset is designed for image recognition and includes 10 classes, with 5,000 labeled images for training and 8,000 images for testing.

\myparatight{FER2013~\cite{goodfellow2013challenges}}This dataset consists of 35,886 grayscale images depicting facial expressions, divided into 28,708 training images, 3,589 PublicTest images, and 3,589 PrivateTest images. The images represent seven expression categories: anger, disgust, fear, happiness, sadness, surprise, and neutral.

\section{Attack Description} 
\label{attack_decp}

\myparatight{Passive Membership Inference Attack~\cite{song2019privacy}}Once the global model is downloaded from the server, the attacker determines an input sample to be a member if the model predicts it correctly; otherwise, it is classified as a non-member.

\myparatight{Gradient Ascent (GA)~\cite{nasr2019comprehensive}}The attack uses gradient ascent on target samples to heighten the prediction gap between members and non-members. Upon receiving the global model parameters, it conducts inference in the manner of a passive membership inference attack.

\myparatight{AGREvader~\cite{zhang2023agrevader}}Rather than only altering the attack samples, AGREvader blends the attack gradients with normal gradients to ensure that the resulting combined gradients remain close to benign gradients in Euclidean norm, preventing noticeable deviation.

\myparatight{Adaptive attack}We examine a strong adversarial setting where the attacker is fully aware of the server's use of \alg. In this scenario, an adaptive attack is devised by carefully constructing gradients that inherently evade ATM filtering. The pseudocode for this attack is presented in Algorithm~\ref{alg:adaptive_attack}.

\section{Comparison Defenses} 
\label{aggregate_rule}

We evaluate the performance of our attack and defense using the following mechanisms:
% %

\myparatight{Differential Privacy~\cite{dwork2008differential}}
The server adds Gaussian noise to all received gradients before performing the aggregation operation.

\myparatight{Top-$k$~\cite{aji2017sparse}}
This approach selects the top \( k \) gradient dimensions with the highest absolute values for updates in the aggregation process, setting all other dimensions to zero.

\myparatight{FedAvg~\cite{McMahan17}}This trivial aggregation rule takes a simple average of the client updates. 

\myparatight{Median~\cite{yin2018byzantine}}This method computes the element-wise median of the gradients in the set $\mathcal{G}$, where $\mathcal{G}$ denotes all clients' uploaded gradients.

\myparatight{Trimmed-mean~\cite{yin2018byzantine}}
Once the server receives the set of all selected update gradients $\mathcal{G}$, for each dimension, it removes the largest $b$ and smallest $b$ elements before calculating their average.

\myparatight{Multi-Krum~\cite{blanchard2017machine}}Upon receiving each model update, the server begins by identifying the \(n-f-1\) updates that are closest in terms of Euclidean distance, where $f$ is the number of malicious clients. It then computes a cumulative score by aggregating these nearby updates. The update with the lowest calculated score is subsequently added to a candidate set. This selection and scoring process is repeated iteratively until a total of \(k\) updates have been selected. Once the candidate set is complete, the server updates the global model by aggregating all the chosen candidate updates. This method ensures that only the most consistent and reliable updates contribute to the global model, enhancing the robustness and accuracy of the federated learning system.

\myparatight{Fang~\cite{fang2020local}}The Fang defense method utilizes two techniques: Error Rate Rejection (ERR) and Loss Function Rejection (LFR), to filter out gradients from potentially malicious participants. By removing gradients that most negatively affect the error rate and loss, respectively, these methods strengthen the model's robustness. This selective exclusion helps ensure that only gradients that contribute positively to the model's performance are retained, improving its overall resilience against adversarial influences.

\myparatight{DeepSight~\cite{rieger2022deepsight}}The mechanism begins by calculating division differences and normalized update energies, then clusters the update gradients based on these metrics and cosine similarity. The cluster labels are refined through a voting scheme. Afterward, \( \ell_2 \)-norm clipping is applied to each benign gradient, and the clipped gradients are aggregated to update the global model. This process ensures that only reliable gradients contribute to the model update, enhancing the system's robustness.

\section{Parameters Setting} 
\label{parameter_setting}
% % 

In the default FL training scenario, there are 10 clients in total, consisting of both benign and malicious clients, with 10\% of the clients being malicious and conducting the full-knowledge attack. In the partial-knowledge attack scenario, there are 50 clients in total, with 10 clients being malicious. During each training round, we assume that 80\% of the clients participate in the training process. The attacker possesses 300 attack samples (\( |D_{\text{attack}}| \)) and 300 masking samples (\( |D_{\text{mask}}| \)). By default, we set \( \gamma = 0.1 \) when constructing \( \hat{D}_{\text{mask}} \). To account for the worst-case scenario, we assume the attacker begins launching their attack in the first training round.
For model training, we utilized the ResNet-20~\cite{krizhevsky2009learning} architecture on the CIFAR-10, STL10, and FER2013 datasets, while employing fully connected models for the Texas100 dataset. For all datasets, we set the training duration to 800 epochs, with a batch size of 64 and a learning rate of 0.01. To optimize the model, we used the Adam optimizer~\cite{kingma2014adam}, which dynamically adjusts the learning rate, momentum, and other training parameters throughout the training process.

\begin{figure*}[!t]
	\centering
	\includegraphics[scale = 0.46]{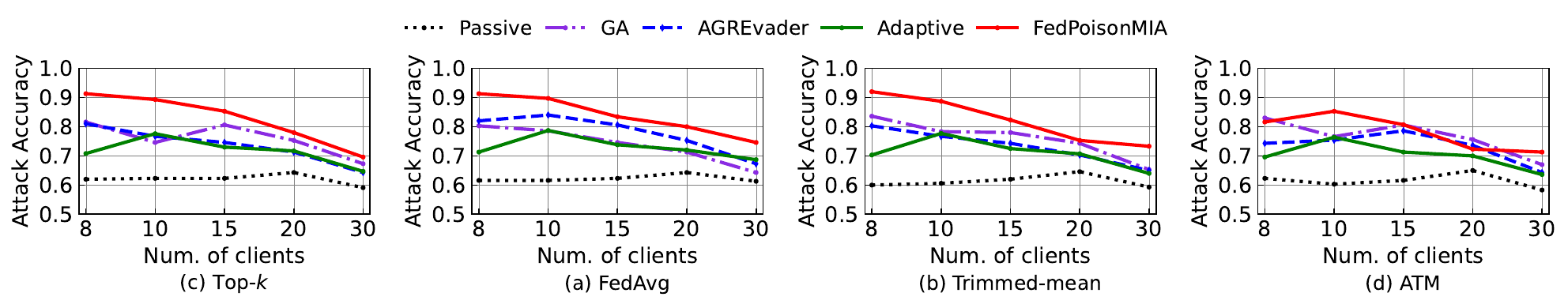}
	\caption{Impact of total number of clients. 
}
	\label{num_clients}
		\vspace{-.1in}
\end{figure*}

\begin{figure*}[htbp]
	\centering
	\includegraphics[scale = 0.46]{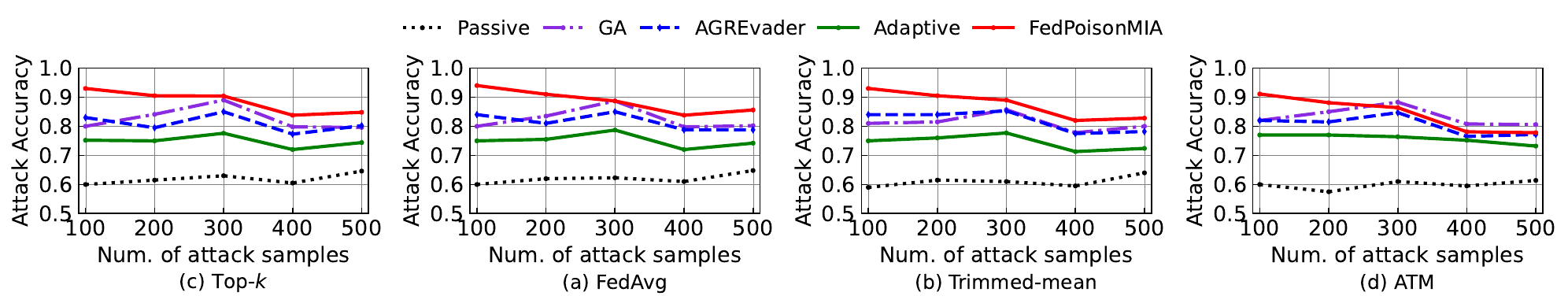}
	\caption{Impact of number of target samples. 
}
	\label{target_sample}
		\vspace{-.1in}
\end{figure*}

\begin{table*}[htbp]
  \centering
    \footnotesize
  \addtolength{\tabcolsep}{-1.6pt}
  \renewcommand{\arraystretch}{1.2}
  \caption{Attack accuracy with \( C=1.0 \) in synchronous setting, where $C$ represents the proportion of clients selected in each round.}
  \label{attack_acc_syn_1.0}
  \renewcommand{\arraystretch}{1.2} % Adjust row spacing
  % [inline block 0: 17 envs, 61578 chars -> data_tex | \begin{tabular}{|c|c|cc|cc|cc|cc|cc|cc|}     \hline...]

    % \vspace{-.2in}
\end{table*}

\begin{algorithm}[t]
    \caption{\alg algorithm.}
    \label{ATM}
    \renewcommand{\algorithmicrequire}{\textbf{Input:}}
    \renewcommand{\algorithmicensure}{\textbf{Output:}}
    \begin{algorithmic}[1]
        \Require Gradients from $n$ clients: $\mathcal{G} = \{\bm{g}_1, \bm{g}_2, \dots, \bm{g}_n\}$, trim parameter $b$.
        \Ensure Aggregated gradient $\bar{\bm{g}}$.
        
        \State Initialize an $n \times n$ zero matrix $\bm{A}$ to record the angles between gradients.  
        \For{each gradient pair $(i, j)$ where $1 \leq i < j \leq n$}

        \State $\theta_{i, j} = arccos(\frac{\bm{g}_i \cdot \bm{g}_j}{\|\bm{g}_i\| \|\bm{g}_j\|})$
            
        \State $\bm{A}[i,j] \leftarrow \theta_{i, j}$
         \EndFor

        \For{each row in $\bm{A}$}
            \State $\bar{\theta_i} = \frac{1}{n} \sum_{\substack{j=1}}^{n} \bm{A}[i,j] $
        \EndFor

        \State Discard the $2b$ gradients with the largest absolute values in $\bar{\theta}$; denote the remaining set as $\hat{\mathcal{G}}$.
        
        \State Calculate $\bar{g}$ by averaging the selected gradients as
         $\bar{\bm{g}} = \frac{1}{|\hat{\mathcal{G}}|} \sum_{ \bm{g} \in \hat{\mathcal{G}}} \bm{g}$

        \State Send aggregated gradient $\bar{\bm{g}}$ to clients.
    \end{algorithmic}
\end{algorithm}

\begin{algorithm}[t]
    \caption{Adaptive attack against \alg.}
    \label{alg:adaptive_attack}
    \renewcommand{\algorithmicrequire}{\textbf{Input:}}
    \renewcommand{\algorithmicensure}{\textbf{Output:}}
    \begin{algorithmic}[1]
        \Require Total number of clients $n$, a set of benign gradients $\mathcal{G}_\mathcal{B}$, attack gradient $\bm{g}_{\text{attack}}$, trim parameter $b$.
        \Ensure Adjusted malicious gradient $\bm{g}_{\text{adaptive}}$.
        
        \Repeat
            \State Initialize an $n \times n$ zero matrix $\bm{A}$ to record the angles between gradients.  
            \For{gradient pair $(i, j)$ where $1 \leq i < j \leq n$}
    
            \State $\theta_{i, j} = arccos(\frac{\bm{g}_i \cdot \bm{g}_j}{\|\bm{g}_i\| \|\bm{g}_j\|})$
                
            \State $\bm{A}[i,j] \leftarrow \theta_{i, j}$, $\bm{A}[j, i] \leftarrow \theta_{i, j}$
             \EndFor
    
            \For{each row in $\bm{A}$}
                \State $\bar{\theta_i} = \frac{1}{n} \sum_{\substack{j=1}}^{n} \bm{A}[i,j] $
            \EndFor

            \State Let $\theta_{\text{attack}}$ represent the final value of $\bar{\theta}$, corresponding to the average angular deviation between the attack gradient and the benign gradients.
            \State Arrange $\bar{\theta}$ in ascending order and define the trimming threshold $\theta_{\tau}$ as the angle ranked $2b$-th from the largest in the sorted list.
            \If{ \( \theta_{\text{attack}} < \theta_{\tau} \) }
                \State \textbf{break} \Comment{Attack is considered successful}
            \Else
                \State Select the benign gradient $\bm{g}_k$ with the greatest angular deviation from $\bm{g}_{\text{attack}}$.
                \State Update \( \bm{g}_{\text{attack}} \leftarrow \frac{1}{2}(\bm{g}_{\text{attack}} + \bm{g}_k) \)
            \EndIf
            
        \Until{convergence} 
        
        \State \Return $\bm{g}_{\text{adaptive}} \gets \bm{g}_{\text{attack}}$
    \end{algorithmic}
\end{algorithm}

\end{document}